\theoremstyle{plain}
\newtheorem{thm}{\protect\theoremname}
\theoremstyle{remark}
\theoremstyle{definition}
\newtheorem{lemma}[thm]{Lemma}
\newtheorem{corollary}{Corollary}
\newtheorem{definition}{Definition}
\newtheorem{theorem}{Theorem}
\providecommand{\definitionname}{Definition}
\providecommand{\remarkname}{Remark}
\providecommand{\theoremname}{Theorem}
\newcommand{\bitstring}{{bitstring}}
\newcommand{\bitstrings}{{bitstrings}}
\newcommand{\graph}{\mathcal{G}}
\newcommand{\cC}{\mathcal{C}}
\newcommand{\cD}{\mathcal{D}}
\newcommand{\cN}{\mathcal{N}}
\newcommand{\cP}{\mathcal{P}}
\newcommand{\cS}{\mathcal{S}}
\newcommand{\cT}{\mathcal{T}}
\newcommand{\cV}{\mathcal{V}}
\newcommand{\Ex}{{\mathsf {Ex}}}
\newcommand{\dsf}{{\mathsf {d}}}
\newcommand{\nn}{{\mathbb {N}}}
\newcommand{\kk}{{\mathbb {K}}}
\newcommand{\bldc}{{\mbox{\boldmath $c$}}}
\newcommand{\bldw}{{\mbox{\boldmath $w$}}}
\newcommand{\bldalpha}{{\mbox{\boldmath $\alpha$}}}
\newcommand{\vertices}{\mathcal{V}}
\newcommand{\family}{\mathcal{F}}
\newcommand{\edges}{\mathcal{E}}
\title{\bf Forwarding Without Repeating: \\ Efficient Rumor Spreading in Bounded-Degree Graphs}
\author{
%{\bf Vincent Gripon} \and {\bf Vitaly Skachek} \and {\bf Michael Rabbat} \\
%$\,$ \\
%Department of Electrical and Computer Engineering \\ McGill University \\
%3480 University St., Montr\'eal, QC H3A 2A7, Canada 
Vincent~Gripon, Vitaly~Skachek, and Michael~Rabbat\\
Department of Electrical and Computer Engineering\\
McGill University, Montr\'eal, Canada\\
Email: vincent-gripon@ens-cachan.org, vitaly.skachek@gmail.com, michael.rabbat@mcgill.ca
}
\date{}
\begin{document}

\maketitle
\thispagestyle{empty}

\begin{abstract}
We study a gossip protocol called \emph{forwarding without repeating} (\texttt{fwr}). The objective is to spread multiple rumors over a graph as efficiently as possible. \texttt{fwr} accomplishes this by having nodes record which messages they have forwarded to each neighbor, so that each message is forwarded at most once to each neighbor. We prove that \texttt{fwr} spreads a rumor over a strongly connected digraph, with high probability, in time which is within a constant factor of optimal for digraphs with bounded out-degree. Moreover, on digraphs with bounded out-degree and bounded number of rumors, the number of transmissions required by \texttt{fwr} is arbitrarily better than that of existing approaches. Specifically, \texttt{fwr} requires $O(n)$ messages on bounded-degree graphs with $n$ nodes, whereas classical forwarding and an approach based on network coding both require $\omega(n)$ messages. Our results are obtained using combinatorial and probabilistic arguments. Notably, they do not depend on expansion properties of the underlying graph, and consequently the message complexity of \texttt{fwr} is arbitrarily better than classical forwarding even on constant-degree expander graphs, as $n \rightarrow \infty$. In resource-constrained applications, where each transmission consumes battery power and bandwidth, our results suggest that using a small amount of memory at each node leads to a significant savings.
\end{abstract}

\newpage 
\setcounter{page}{1}

\section{Introduction}

We consider the now widely-studied problem of spreading rumors over a
graph. Initially a subset of nodes have a rumor and we are interested
in how long it takes to spread these messages to all nodes in the
graph. In this paper we are particularly interested in a variant of
the classical push-based forwarding algorithm. In the variant, which
we refer to as \emph{forwarding without repeating}, a node already
holding one rumor forwards it to a neighbor only one time. Thus, each
node must record which rumors it has sent to which neighbors. In
bounded degree networks, if the number of rumors is bounded, we show
that the number of transmissions required for forwarding without
repeating to spread the rumors to all nodes is infinitely
smaller than that of standard
(push-based) forwarding. Moreover, the delay of forwarding without
repeating is within a constant factor of optimal for any strongly
connected graph. 

Our motivation comes from the variety of resource-constrained applications, where reducing the number of transmissions is important for efficiency of the system. These include efficient broadcast of code updates in networks of embedded wireless sensing and/or control devices~\cite{Trickle,Mesbahi} and content dissemination in mobile social networks~\cite{Chaintreau}. In these applications, devices have limited onboard energy resources and limited bandwidth. Reducing the number of  transmissions can both free up communication resources and conserve energy resources. The forwarding without repeating algorithm considered here requires the use of additional memory at each node. However, in networks with bounded degree and where the goal is to spread a bounded number of rumors, we argue that using additional memory resources at each node may be beneficial if it leads to considerable savings in the number of transmissions.

The literature on rumor spreading is vast. Here we briefly review the developments which are most relevant to our discussion. The seminal work of Frieze and Grimmet~\cite{Frieze} introduces the problem of rumor spreading (a.k.a.~gossip) and studies push-based gossip protocols on complete graphs. Demers et al.~\cite{Demers} consider the use of gossip algorithms for lazy updating of distributed databases. They characterize the performance of push-, pull-, and push-pull algorithms, again on complete graphs. Chierichetti, Lattanzi, and Panconesi~\cite{Italian} provide the best-known upper bound on the number of iterations required for synchronous gossip (where each node communicates with one neighbor at each iteration) to spread a rumor over the entire network. They show that in a network of $n$ nodes with conductance $\phi$, the push-pull strategy reaches every node within $O(\frac{\log^2 \phi^{-1}}{\phi} \cdot \log n)$ rounds, with high probability. In addition, they show that in graphs which satisfy a regularity condition (all nodes have the same degree, to within a multiplicative factor), the push and pull strategies satisfy similar bounds on their own. Relatively little work has focused on bounding the number of transmissions required to spread rumors (with the exceptions of~\cite{Karp} and \cite{Elsasser}, which characterize the message complexity for complete graphs and Erdos-Renyi random graphs, respectively). Most work in this direction is of the form: if a network of $n$ nodes requires $O(r)$ rounds to spread a message, then the number of transmissions is no more than $O(n \cdot r)$.

In a slightly different direction, algebraic gossip protocols are proposed and studied in~\cite{Deb}. There the problem of spreading multiple messages is considered, and a network coding approach is adopted. Rumors are each finite-length bitstrings, and rather than gossiping on individual strings, the authors consider schemes where nodes form and transmit random linear combinations of the messages they have already received at each iteration. After every node has received sufficiently many of these coded strings, it can recover the individual rumors by solving a system of linear equations. In graphs with good expansion properties, this leads to a scheme where all nodes receive all messages in less time than it would take to spread all messages individually (without coding).

In this paper we compare three approaches to rumor spreading: classical (push-based) forwarding (\texttt{for}), the network coding approach (\texttt{nca}), and forwarding without repeating (\texttt{fwr}). We prove the following results:
\begin{itemize}
\item First, we analyze \texttt{fwr} in the synchronous time setting and we show that the number of rounds required for it to converge is within a constant factor of the diameter of the network, which is a trivial lower bound for rumor spreading on any graph.

\item Next, we analyze all three algorithms in the asynchronous time
  setting, where only a single pair of neighboring nodes exchange
  messages in each round. A trivial lower bound on the number of
  rounds required in this setting is $n$, the number of nodes in the
  network. For strongly connected (di)graphs with bounded out-degree,
  we show that \texttt{fwr} is within a constant factor of this lower
  bound. Moreover, when considering any family of graphs with bounded
  degree, we show that both \texttt{for} and \texttt{nca} require a
  number of rounds which is $\omega(n)$ (for $n \rightarrow
  \infty$). Since the number of transmissions is proportional
  to the number of rounds in the asynchronous setting, this implies
  that \texttt{fwr} is arbitrarily more efficient than both
  \texttt{for} and \texttt{nca} in large networks in terms of message
  complexity.

\end{itemize}
Our proofs use a combination of combinatorial and probabilistic arguments. Notably, our results do not involve assumptions about the conductance or other expansion-related properties of the underlying graph. Hence, these conclusions apply even to the family of constant-degree expander graphs~\cite{Reingold}. Another notable feature of this work is that, in contrast to most previous work which assumes the graph is undirected, our results are obtained in the general setting of strongly connected directed graphs.

\section{Notation, Assumptions, and Problem Formulation} \label{sec:notation}

Consider a connected directed graph $\graph = ( \vertices, \edges )$, where $\vertices$ is the finite set of vertices, $|\vertices | = n$, and $\edges \subseteq \{ (u, v) \; : \; u, v \in \vertices, u \neq v \}$ 
is the set of directed edges. Each vertex represents an
agent in a distributed system (e.g., terminals in a communication
network). Edges represent communication media (e.g., wires in an
Ethernet network, proximity in a wireless network).
We say that a vertex $v'$ is a \emph{successor} of the vertex $v$ if
$(v,v')\in\edges$. We also say that $v$ is a \emph{predecessor} of
$v'$. A neighbor of $v$ is a vertex that is either its predecessor or its successor.

We impose the \textit{gossip transmitter constraint}, which
implies that at any given time, any vertex in the network can only send a
{\bitstring} to \emph{at most one} of
its successors (see also the definition in~\cite{Mosk-Aoyama}). 
Conversely, a vertex may receive multiple {\bitstrings}
from different predecessors at the same time. 
This assumption is reasonable considering a wired network, for
instance, where multiple incoming messages can be buffered.

The out-degree $d^{out}(v)$ (respectively, in-degree $d^{in}(v)$) of a vertex $v$ is the number of
successors (respectively, predecessors) it has, namely, 
\[
d^{out}(v)=|\{v' \; : \; (v,v')\in\edges\}| \quad \text{ and } \quad d^{in}(v)=|\{v' \; : \; (v',v)\in\edges\}| \; . 
\]
The out-degree and in-degree of the graph $\graph$
are the maximum out-degree and the maximum
in-degree, respectively, taken over all vertices in $\graph$:
\[
d^{out}(\graph) = \max_{v \in \vertices} d^{out}(v) \quad \text{ and } \quad 
d^{in}(\graph) = \max_{v \in \vertices} d^{in}(v).
\]

A path in $\graph$ is a sequence of vertices $v_0,v_1,\dots,v_\ell$
such that for all $i = 0,1, \cdots, \ell-1$ holds $(v_i,v_{i+1})\in \edges$; $\ell$
is called the length of the path. A shortest path between $v$ and $v'$
is a path such that the first vertex is $v_0 = v$, the last is $v_\ell = v'$ and the
length of the path is minimum over all paths from $v$ to $v'$. We refer to the \emph{distance} between $v$ and $v'$ as the length of the shortest path. (Note that, in general, this distance is not a proper metric; it is not necessarily symmetric.) The diameter
$D(\graph)$ is defined as the maximum length of a shortest path
between any two vertices in $\graph$.
If there exists a path between all $v \in \cV$ and $v' \in \cV$,
we say that the graph is strongly
connected. If $n\geq 2$, a strongly connected graph is such that for all $v \in \cV$ holds $d^{out}(v)\geq 1$
and $d^{in}(v) \geq 1$. In the sequel,
the graph $\graph$ is always assumed to be strongly connected.

To perform an asymptotic analysis on the performance of the algorithms in the paper, 
we consider a family $\family = \{ \graph_i \}_{i=1}^\infty$ of
graphs with increasing number of vertices. We say that the family of graphs 
$\family$ has a bounded out-degree if
\[
\exists \dsf \in \nn \; \; \mbox{ such that } \; \; \forall \graph_i \in \family \; : \; d^{out}(\graph_i) \leq \dsf \; . 
\]
Similarly, the family $\family$ is said to have a bounded in-degree if 
\[
\exists \hat{\dsf} \in \nn \; \; \mbox{ such that } \; \; \forall \graph_i \in \family \; : \; d^{in}(\graph_i) \leq \hat{\dsf} \; . 
\]

Examples of families of bounded degree graphs include:
a family of directed symmetric chain graphs with $\dsf = \hat{\dsf} = 2$;
a family of $q$-ary rooted directed trees 
with $\dsf = q$ and $\hat{\dsf} = 1$; a family of
directed symmetric toroidal lattices with $\dsf = \hat{\dsf} = 4$; and a family of ring graphs 
with $\dsf = \hat{\dsf} = 2$.

%--------------------------------------------------------------------
%\medskip

Throughout this paper, we consider the following communication scenario: initially, each vertex has only the
knowledge of its successors and, possibly, its predecessors (the global structure of the graph is
unknown). The representation each vertex has of its neighbors is
relative, meaning that there is no global indexing of nodes in the
graph.\footnote{This setup is referred to as \emph{anonymous networking} 
in the literature, see, e.g.,~\cite{Attiya}.} Also, $m$ distinct {\bitstring}s are initially dispersed throughout the network (each {\bitstring} is initially held by one vertex, different {\bitstring}s may be held by different initial vertices).
The {\bitstring}s are assumed to be drawn from a very large set, such as $\kk^t$, where $\kk$ is a finite field 
(not necessarily binary) and $t$ is an integer.  

The goal of the \emph{covering problem} is to deliver all $m$ {\bitstring}s to all vertices in the graph. The transmission can be carried 
out either in {\bf synchronous} or in {\bf asynchronous} mode. In either setting, at most one
{\bitstring} is transmitted at a given time from a vertex to one of its
successors. Typically, at each point of time, a vertex can transmit to its successors 
at most one \bitstring. 

\begin{description}
\item[In the synchronous model,] 
the network is provided with an universal clock, such that 
all the transmissions proceed in synchronized rounds, and all vertices can transmit at each round. 
\item[In the asynchronous model,] each vertex is equipped with a clock that ticks according to a rate-1 Poisson point process. 
The clocks at different nodes are assumed to be independent of each other. When the clock ticks at a given vertex, that vertex can transmit (although it may also decide not to do so). Via the additive property of Poisson processes, the collection of clock ticks across the network can be viewed as being generated by a single rate-$n$ Poisson process, where each clock tick is assigned uniformly and independently to one of the vertices. Hence, the next clock tick is always uniform and \emph{i.i.d.} over the vertices in the network.
\end{description}

In both synchronous and asynchronous setups, we consider a \emph{call} model; i.e., where the transmitting vertex 
calls a communicating partner chosen from all its successors.
As it was observed in~\cite{Kempe}, the gossip algorithm typically consists of two decision layers. 
The first layer, \emph{a basic gossip algorithm}, is related to the decision made by a vertex about 
which of its successors to contact when its clock ticks. This decision can be 
made in either deterministic or randomized manner. For example, vertex can select one of its 
successors, at random using uniform distribution.  

The second layer, \emph{a gossip-based protocol}, is related to what content is transmitted to the successor and how the internal status of the transmitter and receiver is updated.
This content can be just one of the {\bitstring}s the transmitting vertex already has in its possession, 
or it can be a linear combination of some {\bitstring}s, akin the network coding (see~\cite{Deb}). 

%----------------------------------------------------------------

We consider three problems that are specific cases of what is usually
called \emph{gossip covering} or \emph{information spreading} problems. 
\emph{Randomized gossip algorithms} are usually employed to solve
these problems. The three problems are the following.
\begin{enumerate}
\item \textbf{Expected covering.} In expected covering, the
  objective is that each vertex in $\graph$ receives all {\bitstring}s
at least once in expectation after an infinite number of clock ticks.
\item \textbf{Almost sure covering.} In almost sure covering,
  the objective is that, after an infinite number of clock ticks, each vertex in $\graph$ has received all the {\bitstring}s with probability
one.
\item \textbf{Sure covering.} In sure covering, the objective is to guarantee (i.e., for \emph{every} sample paths) that each vertex
  in $\graph$ receives all the {\bitstring}s after a finite number of transmissions.
\end{enumerate}

There exists a hierarchy between those three problems as stated in the following theorem, which is proved in Appendix~\ref{app:proofOfThm1}.

\begin{theorem} 
Sure covering problem is strictly more difficult than almost sure covering problem that is
itself strictly more difficult than expected covering problem.
\label{thm:almostsureversusexpected}
\end{theorem}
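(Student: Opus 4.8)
The plan is to read ``problem $A$ is strictly more difficult than problem $B$'' as the conjunction of two facts: (i) every gossip protocol that solves $A$ (on a given strongly connected graph, with a given initial placement of the {\bitstrings}) also solves $B$; and (ii) there exist a graph, an initial placement, and a protocol that solves $B$ but not $A$. To make (i) precise I would first fix the underlying probability space: a sample path $\omega$ records the entire realization of the clock process (in the asynchronous model, the i.i.d.\ sequence of nodes whose clocks tick, together with the tick times) and of every internal coin flip the protocol uses, so that $\omega$ determines the full trajectory of the system. For a fixed protocol let $X_{v,b}(\omega) \in \nn \cup \{\infty\}$ denote the number of times vertex $v$ receives {\bitstring} $b$ along $\omega$, and let $\mathsf{Cov}$ be the event that, after finitely many transmissions, $X_{v,b} \geq 1$ for every vertex $v$ and every {\bitstring} $b$.

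Next I would dispatch the two ``at least as difficult'' directions, which are immediate from this setup. If a protocol solves sure covering then, by definition, $\mathsf{Cov} = \Omega$, so $\Prob(\mathsf{Cov}) = 1$ and the protocol solves almost sure covering. If a protocol solves almost sure covering then, for every $v$ and $b$, we have $X_{v,b} \geq 1$ almost surely, hence $\Ex[X_{v,b}] \geq 1$, so the protocol solves expected covering. This already yields the (non-strict) chain sure $\Rightarrow$ almost sure $\Rightarrow$ expected.

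For strictness I would exhibit two small witnesses, both on the graph $\graph$ with vertex set $\{u,v\}$ carrying a single {\bitstring} $b$ held initially by $u$ (a directed edge $u \to v$, completed to a directed $2$-cycle if one insists on strong connectivity). \emph{Almost sure but not sure:} let $u$, at each of its clock ticks, transmit $b$ to $v$ with probability $\tfrac12$ and do nothing with probability $\tfrac12$, stopping once it has transmitted; since the number of ticks before the first transmission is almost surely finite, $\Prob(\mathsf{Cov}) = 1$, yet the sample path on which $u$ chooses ``do nothing'' at every tick is a genuine element of $\Omega$ along which $v$ never receives $b$, so sure covering fails. \emph{Expected but not almost sure:} let $u$ flip a single coin at its first tick; with probability $\tfrac12$ it transmits $b$ to $v$ at its next two ticks (so $v$ receives $b$ twice), and with probability $\tfrac12$ it never transmits. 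Then $\Ex[X_{v,b}] = \tfrac12 \cdot 2 + \tfrac12 \cdot 0 = 1$, so expected covering holds (adopting the convention that a vertex trivially ``receives'' any {\bitstring} it initially holds, so that $u$ imposes no constraint), while $\Prob(v \text{ receives } b) = \tfrac12 < 1$, so almost sure covering fails. On this same $\graph$ the deterministic ``transmit $b$ at the first tick'' protocol solves sure covering, which confirms the separations are not artifacts of the stronger problem being unsolvable.

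I expect the only real obstacle to be bookkeeping rather than ideas: one must set up the sample space carefully enough that the measure-zero ``bad'' sample paths used in the separations genuinely belong to $\Omega$, and one must check that the protocols above are admissible in the paper's model (they respect the gossip transmitter constraint and use only node-local, relative information, which here is immediate since $u$ has a single successor). The remaining arithmetic and the Borel--Cantelli-style ``almost surely finitely many ticks suffice'' argument are routine.
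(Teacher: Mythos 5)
Your proposal is correct and follows essentially the same route as the paper: the non-strict implications are handled by the same counting argument ($X_{v,b}\geq 1$ a.s.\ implies $\Ex[X_{v,b}]\geq 1$), and the two separating examples on the two-vertex strongly connected graph (coin-flip transmission for almost-sure-but-not-sure, and the ``send twice with probability $\tfrac12$, else never'' protocol for expected-but-not-almost-sure) are exactly the witnesses used in Appendix~A. The only differences are cosmetic (a fixed probability $\tfrac12$ in place of the paper's generic $\alpha$, and your added remark that sure covering is also solvable on this graph).
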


By slightly abusing the language, we sometimes say the function $f(\cdot)$ is a bound for the covering
problem, if $f(\cdot)$ is a bound for all three covering problems.

\begin{comment}
Classically, there exists two different models for time in distributed
computation: synchronous and asynchronous. In synchronous time setting, the network
is provided with a universal clock such that at each time slot, all
vertices can send a {\bitstring} to any of its successors independently of the
choices of the other vertices. In asynchronous time setting, each vertex is given a
rate 1 Poisson timer before it can send a {\bitstring} to one of its
successors. Then, recursively and independently of whether it sent a {\bitstring} or not, a new
rate 1 Poisson timer is instantiated before the next possible
transmission of this vertex. As far as the order in which vertices
are able to communicate is concerned, this model is equivalent to the uniform
random selection of vertices in the network.
\end{comment}

We define two quantities characterizing the gossip algorithms: 
\emph{the delay} and \emph{the number of transmissions}.  

\begin{definition} In a synchronous gossip algorithm: the delay is the total number of 
transmission rounds.

In an asynchronous gossip algorithm: the delay is the total number of times any vertex
is allowed to transmit (i.e., the number of global clock ticks).
\end{definition}

\begin{definition}
The number of transmissions in a (synchronous or asynchronous) gossip
algorithm is the total number of actual transmissions used in the algorithm.
\end{definition}  

The number of
transmissions in the asynchronous time setting is always smaller or equal to the delay. 
It is also smaller or equal to $n$ times the delay of the same algorithm in synchronous time
setting. The delay characterizes the time required to solve one of the covering problems. 
The number of transmissions is a different measure, which is significant if the main objective is to save
non-necessary transmissions (for instance, the power consumption, given that each transmission consumes a 
constant amount of energy). 
Note that these quantities tend to infinity as $n$ grows. In that case we
are interested in comparing the asymptotic behavior of ratios of these quantities 
for different algorithms.

%There exist known solutions for the almost sure covering problem: among them are
%the classical random forwarding (\texttt{for}) and the network coding
%approach (\texttt{nca})~\cite{Deb}. In this paper, we introduce new lower bounds
%for almost sure covering using these algorithms. We study a distributed gossip
%algorithm, which we call forwarding-without-repeating (\texttt{fwr}) that
%solves the problem of sure covering. We
%show that this algorithm, (a) when used in the synchronous time
%setting and on a family of graphs with bounded out-degree
%has an expected time in a constant factor of the lower bound, (b) when
%used in the asynchronous time setting and on a family of graphs with
%bounded out-degree, has an almost sure covering number of
%transmissions in a constant factor of the lower bound. Moreover, in
%the latter scenario, we show that if the family of graphs also has bounded in-degree, then the ratio of the number of
%transmissions using \texttt{fwr} to that of either of the two other
%algorithms tends to zero; i.e., \texttt{fwr} uses infinitely fewer transmissions as $n \rightarrow \infty$.

\section{Algorithms} \label{sec:algorithms}

We consider three different algorithms: \texttt{for}, \texttt{nca} and
\texttt{fwr}. These algorithms are executed at each vertex in the process  
of {\bitstring} transmission and receiving. 
We denote by $\cS(v)$, $\cP(v)$ and $\cC(v)$, respectively,
the set of successors, the
set of predecessors, and the set of {\bitstring}s known to a vertex $v$ at some time during execution of the 
algorithm.

\subsection{Classical forwarding (\texttt{for})}

The \texttt{for} algorithm is straightforward. Each vertex $v$ 
chooses one of its successors randomly and uniformly from $\cS(v)$,
and also chooses one {\bitstring} randomly and uniformly from $\cC(v)$, and sends the chosen 
{\bitstring} to the chosen successor. If it owns no
\bitstring, then it performs no transmission. Note that in some
literature this algorithm is referred to as the \emph{push} version
of \texttt{for}. A pseudo code version of \texttt{for} is presented in
Algorithm \ref{alg:for}. 

\begin{algorithm}[h]
\small
 \SetAlgoLined
 Sending a {\bitstring}:\\
 \Begin{
   \If{$\cC(v)\neq\emptyset$}{
     select $v'$ at random in $\cS(v)$\;
     select $\bldc$ at random in $\cC(v)$\;
     send $\bldc$ to $v'$\;
}}
$\,$ \\
 Receiving a {\bitstring} $\bldc$:\\
   \Begin{
   $\cC(v)\leftarrow \cC(v)\cup \{\bldc\}$\;
}
$\,$ \\
 \caption{Classical forwarding (\texttt{for}) algorithm for vertex
   $v$.}
\label{alg:for}
\end{algorithm}

\subsection{The network coding approach (\texttt{nca})}

In the network coding approach, the {\bitstring}s are assumed to be 
vectors over the finite field $\mathbb{K}$. In each transmission step, 
the vertex sends a random linear combination of the {\bitstring}s it 
already has in its possession. The coefficients of the vectors in this 
linear combination are selected randomly and uniformly from $\mathbb{K}$. 
Denote by $\bldc_1, \bldc_2, \cdots, \bldc_m$
the original {\bitstring}s.

In this approach, the receiving vertex has to be able to recover the 
original {\bitstring}s from the received linear combinations. To do
so, it must obtain $m$ linearly independent vectors. 
Moreover, the coefficients of the encoding must also be known to 
the receiver. Thus, the transmitted message contains both 
a header with the coefficients and the information payload. 
In other words, the message is a pair $(\bldc,\textbf{w})$ where 
$\textbf{w} = (w_1, w_2, \cdots, w_m)$ is a vector of $m$ elements in $\mathbb{K}$, 
and $\bldc = \sum_{i=1}^m w_i \bldc_i$.

The receiving vertex stores in its memory the list of pairs received so far, 
\[
\mathcal{R}(v)= \big\{ \; \left(\hat{\bldc}_i, \hat{\bldw}_i \right) \; \big\}_{i=1}^r  \;, 
\] 
where $r$ is the current size of the list, and $\hat{\bldw}_i = (\hat{w}_{i,1}, \hat{w}_{i,2}, \cdots, \hat{w}_{i,m})$.  
For convenience, define the $r \times m$ matrix $W = \left( \hat{w}_{i,j} \right)_{i = 1,2, \cdots, r; \; j = 1, 2, \cdots, m}$. 

As the size of $\mathcal{R}(v)$ grows, the
vertex may try to solve the corresponding system of linear equations in order 
to retrieve the original {\bitstring}s. Pseudo code for \texttt{nca} is presented in Algorithm~\ref{alg:nca}.

\begin{algorithm}[h]
\small
\SetAlgoLined
Sending a \bitstring:\\
\Begin{
draw a vector $\bldalpha = (\alpha_1, \alpha_2, \cdots, \alpha_r)$ uniformly at random in
$\mathbb{K}^{r}$\;
select a successor $v'$ at random in $\cS(v)$\;
send $\big( \bldc = \sum_{i=1}^r \alpha_i \hat{\bldc}_i, \; \bldw = \bldalpha W  \big)$ to $v'$\;
}
$\,$ \\
Receiving a {\bitstring} $(\bldc,\textbf{w})$:\\
\Begin{
add $(\bldc,\textbf{w})$ to $\mathcal{R}(v)$\;
$r \leftarrow r + 1$ \;
check for a full-rank $m \times m$ submatrix $Z$ of $W$ \;
\If{$Z$ is invertible}{
inverse $Z$ and find the initial {\bitstring}s\;
}
}
$\,$ \\
\caption{Network coding approach (\texttt{nca}) algorithm for node
  $v$.}
\label{alg:nca}
\end{algorithm}

Lemma \ref{lemma:ncafor} gives a relation for the delays and
number of transmissions between \texttt{nca} and \texttt{for}:

\begin{lemma}
For any graph, the delay and the number of
transmissions required by \texttt{nca} are at least that required by
\texttt{for} for the same objective (expected or almost sure covering)
with a single \bitstring.
\label{lemma:ncafor}
\end{lemma}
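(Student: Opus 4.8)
The plan is to reduce to the single-bitstring case ($m=1$) and exhibit an explicit coupling of the two randomized processes, run on the same graph from the same source, under which the \texttt{nca} information-spreading process is pathwise dominated by that of \texttt{for}. First I would record the structural fact that, when $m=1$, every stored pair in $\mathcal{R}(v)$ has the form $(\hat{w}_i\bldc_1,\hat{w}_i)$ with $\hat{w}_i\in\kk$, so $W$ is an $r\times 1$ matrix and the transmitted header is the scalar $w=\bldalpha W$. A receiver can decode $\bldc_1$ exactly when it has received at least one message with $w\neq 0$, and if the sender already holds $\bldc_1$ (i.e.\ $W\neq\bldzero$) then $w=\bldalpha W$ is uniform over $\kk$; thus a forward from an informed vertex succeeds with probability $1-1/|\kk|$ and otherwise carries the zero packet. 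In other words, on a single bitstring \texttt{nca} behaves exactly like \texttt{for} except that each forward independently fails (conveys nothing) with probability $1/|\kk|$.

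Next I would set up the coupling: let both processes use the same sequence of clock ticks (the same activating vertex at each tick) and the same uniform successor choices, and couple \texttt{nca}'s ``coefficient nonzero'' event to sit inside \texttt{for}'s always-successful forward. Writing $I^{\texttt{for}}_t$ and $I^{\texttt{nca}}_t$ for the sets of informed vertices at time $t$, I would prove $I^{\texttt{nca}}_t\subseteq I^{\texttt{for}}_t$ for all $t$ by induction. The base case is immediate since both sets equal the common source. For the step, at a tick with activating vertex $u$ and chosen successor $v$: if $u\notin I^{\texttt{for}}_t$ then $u\notin I^{\texttt{nca}}_t$ and neither set changes; if $u\in I^{\texttt{for}}_t$ then $I^{\texttt{for}}$ gains $v$, while $I^{\texttt{nca}}$ gains at most $v$ (and only when $u\in I^{\texttt{nca}}_t$ and the coupled coefficient is nonzero), so the inclusion is preserved. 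A key point is that in \texttt{nca} a vertex can never inform a successor unless it is itself informed, because $w=\bldalpha W=0$ identically when $W=\bldzero$; this is what prevents \texttt{nca} from ever ``overtaking'' \texttt{for}.

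From the inclusion, the covering time satisfies $T^{\texttt{for}}=\min\{t: I^{\texttt{for}}_t=\cV\}\le\min\{t: I^{\texttt{nca}}_t=\cV\}=T^{\texttt{nca}}$ pathwise, which gives the delay comparison; since the domination holds on every sample path under the coupling, it passes to the almost-sure covering time and, after taking expectations, to the expected-covering criterion, so \texttt{nca} requires at least the delay of \texttt{for} for either objective. For the number of transmissions I would use that \texttt{for} transmits at most once per clock tick and only when the active vertex holds $\bldc_1$, so its transmission count up to covering is bounded by its delay, whereas \texttt{nca} forwards a (coded) packet at every activation it is given, so its transmission count equals its delay. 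Chaining $N^{\texttt{for}}\le T^{\texttt{for}}\le T^{\texttt{nca}}=N^{\texttt{nca}}$ yields the transmission comparison.

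I expect the main obstacle to be the transmission-count step rather than the delay step: one must pin down exactly when each algorithm is deemed to ``transmit'' (in particular that \texttt{nca} expends a transmission even on the zero packet, unlike \texttt{for}, which stays silent when $\cC(v)=\emptyset$), and argue the comparison at the two different stopping times $T^{\texttt{for}}$ and $T^{\texttt{nca}}$ rather than at a common time. The delay step is robust; the only care there is to verify that the uniform successor choice and the uniform coefficient can be realized on a common probability space so that the inclusion $I^{\texttt{nca}}_t\subseteq I^{\texttt{for}}_t$ holds on every sample path.
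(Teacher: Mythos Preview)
Your coupling argument is correct and considerably more careful than the paper's own proof, which is a two-sentence sketch: the paper simply observes that a vertex receiving no message in \texttt{nca} cannot decode, asserts that ``the propagation of messages in \texttt{nca} is analogous to the propagation of an initial bitstring in \texttt{for},'' and concludes. It does not set up a coupling, does not mention the $1/|\kk|$ failure event, and does not give your explicit chain $N^{\texttt{for}}\le T^{\texttt{for}}\le T^{\texttt{nca}}=N^{\texttt{nca}}$ for the transmission count. So on the level of rigor your route is different and strictly more informative.

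There is, however, one genuine gap. The lemma compares \texttt{nca} with $m$ bitstrings (the paper's proof explicitly opens with ``consider \texttt{nca} on a graph with $m$ bitstrings'') to \texttt{for} with a single bitstring; your ``reduce to $m=1$'' step sets $m=1$ in \texttt{nca} as well, and that reduction is not justified. With $m>1$ there may be several initial sources, so it is not obvious that \texttt{nca}$(m)$ dominates \texttt{nca}$(1)$. The fix is easy and fits your framework: for general $m$, fix one bitstring $\bldc_1$ with source $s_1$ and let $I^{\texttt{nca}}_t$ be the set of vertices holding at least one packet whose first header coordinate $w_1$ is nonzero. A vertex can emit such a packet only if it already belongs to $I^{\texttt{nca}}_t$ (otherwise column~$1$ of $W$ is identically zero), and when it does, $w_1=\bldalpha W_{\cdot,1}$ is uniform on $\kk$, so the same $1-1/|\kk|$ success probability applies. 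Your induction then gives $I^{\texttt{nca}}_t\subseteq I^{\texttt{for}}_t$ with \texttt{for} started from $s_1$. Since full-rank decoding in \texttt{nca} requires in particular a nonzero first coordinate, covering in \texttt{nca}$(m)$ implies $I^{\texttt{nca}}=\cV$, and your delay and transmission comparisons go through unchanged.
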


\begin{proof}
Let us consider \texttt{nca} on a graph with $m$ {\bitstring}s. 
In this case, typically the vertex sends out a linear combination 
of the received messages. 
It is straightforward to see 
that the propagation of messages in \texttt{nca} is analogous to the
propagation of an initial {\bitstring} in \texttt{for}.

Since it is clear that a vertex that receives no message cannot retrieve the
initial {\bitstring}s, the delay and the number of transmissions needed in \texttt{nca} is
greater than or equal to that of \texttt{for}.
\end{proof}

Note that in both \texttt{for} and \texttt{nca} 
the selection of the receiving vertex is fully random.
Therefore, these algorithms cannot solve the sure covering problem for
most of non-trivial graphs.

\subsection{Forwarding without repeating (\texttt{fwr})}

When taking a closer look at \texttt{for}, it appears that time and 
transmissions are wasted when a vertex receives a message it already has. This situation, in particular, 
happens when the same predecessor sends the same
message to the same successor twice. As we show below, a simple amendment to \texttt{for}, which prevents
repeated transmissions of the same {\bitstring}, 
can lead to a meaningful improvement.

In the \texttt{fwr} algorithm, each vertex stores in
its memory the list of pairs $\mathcal{L}(v) 
\subseteq \cC(v) \times \cS(v)$ of {\bitstring}s and successors it has sent
messages to, including which specific {\bitstring}s were sent, and it also records the predecessors from which it has received messages. By using this data, the vertex will not send a {\bitstring} $\bldc$ to a
successor $v'$ if the pair $(\bldc,v')$ already appears in
$\mathcal{L}(v)$. Thus some redundant transmissions may be avoided.
Note that we slightly abuse the notation by treating $\mathcal{L}(v)$ as a set. 

Pseudo code version for \texttt{fwr} is given in Algorithm \ref{alg:fwr}. Note that for the sake of simplicity, Algorithm~\ref{alg:fwr} is not presented in a way consistent
with the separation in two distinct layers (basic gossip algorithm and gossip-based protocol).
However, it is straightforward to change the presentation to make it consistent with that model.

\begin{algorithm}[h]
\small
\SetAlgoLined
Sending a {\bitstring}:\\
\Begin{
\If{ $\left( \cC(v)\times \cS(v) \right) \backslash \mathcal{L}(v) \neq \emptyset$}{
select a pair $(\bldc,v')$ uniformly at random in $\left( \cC(v)\times \cS(v)\right) \backslash \mathcal{L}(v)$\;
send the {\bitstring} $\bldc$ to $v'$\;
add $(\bldc,v')$ to $\mathcal{L}(v)$\;
}
}
$\,$ \\
Receiving a {\bitstring} $\bldc$ from $v'$:\\
\Begin{
add $(\bldc,v')$ to $\mathcal{L}(v)$\;
}
$\,$ \\
\caption{Forwarding without repeating (\texttt{fwr}) algorithm for
  node $v$.}
\label{alg:fwr}
\end{algorithm}

When a vertex receives a {\bitstring} $c$ from predecessor $v'$, it stores the pair $(\bldc, v')$
in the list $\mathcal{L}(v)$. However, in the analysis in this paper, for the sake of simplicity 
we assume a slightly weaker version of 
the algorithm \texttt{fwr}. More specifically, we assume that upon \emph{receiving} a message, 
the vertex performs no update of the list $\mathcal{L}(v)$.

Since \texttt{fwr} can be viewed as an improved version of \texttt{for}, where some
redundant transmissions are avoided, it is immediate that the
delay and the number of transmissions required using \texttt{fwr} is
less or equal to those required when using \texttt{for}. 

In the following sections, we derive some bounds for covering
problems for both synchronous and asynchronous settings.

\section{Synchronous time setting} \label{sec:synchronous}

First, we derive a lower bound on the covering delay for any algorithm and any $m$ under the synchronous update model. 
Assume that only one {\bitstring} is used in the network. The number of vertices
that have obtained the {\bitstring} cannot grow by more than a factor of two during one step of
the algorithm, which corresponds to the case where all vertices having
the {\bitstring} send it to distinct vertices that do not have it yet. It follows that the delay for solving the covering problem using any algorithm on $\graph$, $\cD(\graph)$, is at least:

\begin{equation}
\cD (\graph) \ge \lceil \log_2(n) \rceil \; . 
\label{eq:ssyncmin}
\end{equation}

This lower bound is not tight for all graphs. An
alternative lower bound depends on the {\bitstring} diameter of a graph. The {\bitstring}
diameter $D_c(\graph)$ of the graph is the maximum length of a
shortest path between a vertex that initially has a {\bitstring} and any
other vertex.

\begin{theorem}
In the synchronous time setting, a lower bound for covering delay is:
\[
\cD (\graph) \ge D_c(\graph) \; . 
\]
\label{thm:syncmin}
\end{theorem}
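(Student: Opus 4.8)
The plan is to formalize the intuition that in a single synchronous round a given bitstring can be relayed across at most one edge, so the set of vertices holding it grows by at most one ``distance ring'' per round. This makes the bound a property of the communication model itself, valid for \emph{any} algorithm.

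First I would fix a bitstring $\bldc$ and let $v_0$ be the vertex initially holding it. For $t \geq 0$ let $S_t \subseteq \vertices$ be the (random, algorithm-dependent) set of vertices that possess $\bldc$ at the end of round $t$, so $S_0 = \{ v_0 \}$. The key observation is a one-step containment: for every $t \geq 1$, if $u \in S_t \setminus S_{t-1}$ then $u$ received $\bldc$ during round $t$ from some predecessor $w$, and necessarily $w \in S_{t-1}$, since a vertex cannot forward a bitstring it does not yet hold. Hence, regardless of the basic gossip algorithm and gossip-based protocol in use,
\[
S_t \;\subseteq\; S_{t-1} \cup \{\, u \in \vertices : (w,u) \in \edges \text{ for some } w \in S_{t-1} \,\} \; .
\]

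Next I would prove by induction on $t$ that $S_t \subseteq B_t(v_0) := \{\, u \in \vertices : \operatorname{dist}(v_0,u) \leq t \,\}$, where $\operatorname{dist}$ is the (directed) distance of Section~\ref{sec:notation}. The base case $t = 0$ is immediate. For the inductive step, combine the one-step containment with the triangle-type inequality $\operatorname{dist}(v_0,u) \leq \operatorname{dist}(v_0,w) + 1$, valid whenever $(w,u) \in \edges$ (append the edge to a shortest $v_0$-to-$w$ path): every $u \in S_t$ either lies in $S_{t-1} \subseteq B_{t-1}(v_0) \subseteq B_t(v_0)$, or satisfies $\operatorname{dist}(v_0,u) \leq (t-1) + 1 = t$, so in all cases $u \in B_t(v_0)$.

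Finally, by definition of the bitstring diameter there exist a bitstring $\bldc$, its initial holder $v_0$, and a vertex $v'$ with $\operatorname{dist}(v_0,v') = D_c(\graph)$. The claim gives $v' \notin S_t$ for every $t < D_c(\graph)$, so $v'$ has not received $\bldc$ before round $D_c(\graph)$; since covering requires $v'$ to hold $\bldc$, the covering delay satisfies $\cD(\graph) \geq D_c(\graph)$. There is no real obstacle here; the only point requiring care is to respect edge orientation throughout — bitstrings travel from predecessor to successor, and the relevant distance is the asymmetric directed one — which is precisely why the bound is stated with $D_c(\graph)$ rather than an undirected diameter.
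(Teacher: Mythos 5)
Your proposal is correct and follows essentially the same route as the paper's proof: an induction over synchronous rounds showing that a bitstring can advance at most one hop per round, so after $t$ rounds it can only reside within the directed ball of radius $t$ around its initial holder, and the vertex realizing $D_c(\graph)$ therefore cannot be covered earlier. The only cosmetic difference is that you argue pathwise via the sets $S_t$, whereas the paper phrases the same induction in terms of which vertices have nonzero probability of holding the bitstring (so as to cover the expected-covering variant explicitly); your pathwise containment yields that conclusion just as well.
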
 
The proof of Theorem~\ref{thm:syncmin} appears in Appendix~\ref{app:syncmin}. 
\medskip

By using \texttt{fwr},
it is immediate that in at most $m \cdot d^{out}(v)$ steps (corresponding to the
case where $v$ has all $m$ {\bitstring}s), all the successors of $v$
will obtain the {\bitstring}s known to $v$. By using this principle recursively, one
can derive the following upper bound on the covering delay:

\begin{lemma}
When using \texttt{fwr} on a graph $\graph$ in synchronous settings, 
the covering delay is at most:

\begin{equation}
\cD(\graph) \le m \cdot d^{out}(\graph) \cdot D_c(\graph) \; . 
\label{eq:smaxsyncfwr}
\end{equation}
\end{lemma}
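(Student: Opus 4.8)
The plan is to formalize the recursive argument sketched in the text by organizing the vertices into layers according to their distance from an initial bitstring holder, and then bounding how long it takes for each successive layer to be fully covered. Fix a vertex $v^*$ that initially holds some bitstring, and for $k \ge 0$ let $\mathcal{V}_k$ denote the set of vertices at distance exactly $k$ from $v^*$; by definition of the bitstring diameter $D_c(\graph)$, every vertex lies in some $\mathcal{V}_k$ with $k \le D_c(\graph)$. Actually, since different bitstrings may originate at different vertices, I would instead argue per bitstring: fix one bitstring $\bldc$ originating at some vertex $u$, and show that within $m \cdot d^{out}(\graph) \cdot D_c(\graph)$ rounds every vertex receives $\bldc$. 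Since this holds simultaneously for all $m$ bitstrings, and the bound does not depend on which bitstring we track, the covering delay is at most this quantity.

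The core claim I would prove by induction on $k$ is: after at most $k \cdot m \cdot d^{out}(\graph)$ rounds, every vertex within distance $k$ of $u$ has received $\bldc$. The base case $k=0$ is immediate since $u$ holds $\bldc$ from the start. For the inductive step, suppose every vertex at distance $\le k$ has $\bldc$ by round $k \cdot m \cdot d^{out}(\graph)$. Consider any vertex $w$ at distance $k+1$; it has a predecessor $w'$ at distance $k$, which already holds $\bldc$. The key observation is that under \texttt{fwr}, once $w'$ holds $\bldc$, the pair $(\bldc, w)$ is selected and sent within finitely many of $w'$'s transmission opportunities, because \texttt{fwr} never re-sends a pair already in $\mathcal{L}(w')$ — so $w'$ can send at most $|\cC(w')| \cdot |\cS(w')| \le m \cdot d^{out}(\graph)$ distinct messages in total. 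Hence within $m \cdot d^{out}(\graph)$ additional rounds after $w'$ acquired $\bldc$, the pair $(\bldc, w)$ must have been transmitted (in the worst case it is the last of $w'$'s outgoing pairs). This puts the arrival at $w$ no later than round $(k+1) \cdot m \cdot d^{out}(\graph)$, completing the induction.

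The main subtlety — and the step I would be most careful about — is that in the synchronous model $w'$ transmits exactly once per round, so the count of "at most $m \cdot d^{out}(\graph)$ distinct transmissions before $(\bldc,w)$ is sent" translates directly into a round count; but one must check that $w'$ is never idle while it still has an unsent pair. Under \texttt{fwr} as specified in Algorithm~\ref{alg:fwr}, whenever $(\cC(w') \times \cS(w')) \setminus \mathcal{L}(w')$ is nonempty the vertex does send, so idleness only occurs after all pairs are exhausted, and in particular cannot occur before $(\bldc,w)$ has been sent. A second point worth noting is that using the weaker variant of \texttt{fwr} (no list update on receipt) only makes vertices send more, so the upper bound is unaffected. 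Chaining the induction up to $k = D_c(\graph)$, and recalling that every vertex is within distance $D_c(\graph)$ of the origin of each bitstring, yields $\cD(\graph) \le m \cdot d^{out}(\graph) \cdot D_c(\graph)$ as claimed.
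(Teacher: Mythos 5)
Your proposal is correct and follows essentially the same route as the paper: the paper's (very terse) argument is exactly the recursion that a vertex holding a bitstring exhausts its at most $m \cdot d^{out}(\graph)$ unsent pairs within that many rounds, chained along a shortest path of length at most $D_c(\graph)$. Your layered induction, together with the explicit checks that a vertex is never idle while the relevant pair is unsent and that the weaker receive rule does not hurt the bound, is just a careful formalization of that same sketch.
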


It follows from \eqref{eq:ssyncmin} and Theorem \ref{thm:syncmin} that
the delay in \texttt{fwr} is at a constant factor $m \cdot d^{out}(\graph)$ from the optimal number of steps,
for any family of graph with bounded out-degree and bounded number of
initial {\bitstring}s.

%-----------------------------------------------------------------------------

For specific graphs we can provide tighter bound on the sure covering time, as
we state in the following theorem, which is proved in Appendix~\ref{app:tree}.

\begin{theorem}
Let $\graph$ be a directed \emph{tree} with root $v_0$, and assume that $v_0$ has the {\bitstring} $\bldc$. 
Consider synchronous gossip algorithm \texttt{fwr} applied to this graph. 
Then, the number of transmission rounds, $\cN(\graph)$, needed for sure covering of $\graph$ is 
upper bounded by  
\[
\cN(\graph) \le \max_{\Phi} \left\{ \sum_{v \in \Phi} d^{out}(v) \right\} \; ,  
\]
where $\Phi$ runs over all paths from $v_0$ to the leaves of $\graph$, and where
$d^{out}(v)$ denotes the out-degree of $v$ in the tree $\graph$. 
\label{thrm:tree}
\end{theorem}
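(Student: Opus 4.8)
The plan is to exploit the defining feature of an out-tree: every non-root vertex has exactly one predecessor, so the bitstring $\bldc$ can reach a vertex $v$ only along the unique directed path from $v_0$ to $v$, and there is no other route by which $v$ could be covered earlier or later. I would therefore fix an arbitrary sample path of the randomized execution and, for each vertex $v$, let $t_v$ be the round in which $v$ first obtains $\bldc$, with $t_{v_0}=0$. Since sure covering is achieved as soon as $\max_v t_v$ rounds have elapsed, it suffices to bound $\max_v t_v$ by $\max_\Phi \sum_{v\in\Phi} d^{out}(v)$ uniformly over all sample paths.

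The crux is the one-step recursion: if $u$ is the (unique) predecessor of $v$, then $t_v \le t_u + d^{out}(u)$. To justify this, note that there is a single bitstring, so $\cC(u)=\{\bldc\}$ and $|\cC(u)\times\cS(u)| = d^{out}(u)$; consequently, from round $t_u+1$ onward the set $(\cC(u)\times\cS(u))\setminus\mathcal{L}(u)$ is nonempty and $u$ performs exactly one transmission per round, each removing one fresh pair $(\bldc,\cdot)$, until after $d^{out}(u)$ active rounds every successor of $u$ has been contacted once. Hence $u$ forwards $\bldc$ to $v$ at some round in $\{t_u+1,\dots,t_u+d^{out}(u)\}$, regardless of the random order in which $u$ selects its successors. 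Telescoping this inequality along the unique root-to-$v$ path $v_0 = u_0, u_1, \dots, u_k = v$ yields $t_v \le \sum_{i=0}^{k-1} d^{out}(u_i)$.

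Finally I would extend the root-to-$v$ path to a root-to-leaf path $\Phi$ by following any chain of out-edges from $v$ down to a leaf; a leaf has out-degree $0$, so appending it (and any intermediate vertices between $v$ and the leaf, with their nonnegative out-degrees) can only increase the sum, giving $t_v \le \sum_{w\in\Phi} d^{out}(w) \le \max_\Phi \sum_{w\in\Phi} d^{out}(w)$. Since this holds for every vertex $v$ and every sample path, $\cN(\graph) = \max_v t_v$ is bounded by the same quantity, proving the theorem. (The bound is in fact tight: on the sample path in which, along a heaviest root-to-leaf path, each $u_{i-1}$ contacts $u_i$ last among its successors, the leaf of that path is covered exactly at round $\sum_{w\in\Phi} d^{out}(w)$.)

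The main obstacle is making the one-step recursion airtight against the synchronous timing semantics: one must argue that a vertex becomes active in the round immediately following its reception of $\bldc$, that it never idles while it still has an uncontacted successor, and that with a single bitstring its list $\mathcal{L}(u)$ is exhausted after precisely $d^{out}(u)$ active rounds independently of the random choices (here the simplifying assumption that receiving does not update $\mathcal{L}$ is harmless, since in an out-tree the predecessor of $u$ is never a successor of $u$). Once the recursion is established, the remainder is a routine telescoping over the tree.
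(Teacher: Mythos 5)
Your proof is correct and follows essentially the same route as the paper: the paper's argument is exactly the worst-case observation that each vertex on a root-to-leaf path may contact all its other successors before the next path vertex, so $\bldc$ reaches the leaf within $\sum_{v\in\Phi} d^{out}(v)$ rounds, which you simply formalize via the recursion $t_v \le t_u + d^{out}(u)$ and telescoping. Your explicit handling of the timing semantics and the single-bitstring exhaustion of $\mathcal{L}(u)$ is a welcome tightening of the paper's informal wording, but it is the same idea.
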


\section{Asynchronous time setting} \label{sec:asynchronous}

Similar to the scenario of the synchronous time setting, we can
derive a trivial lower bound on the covering number of
transmissions in asynchronous mode on graph $\graph$. Moreover, since the number of transmissions is smaller than the delay, 
this bound also holds for the covering delay. This lower bound
is
\begin{equation}
\cN(\graph) \ge n - 1 \; , 
\label{eq:lowerbound}
\end{equation}
and is a direct consequence of the fact at most one new vertex receives
the {\bitstring} at each asynchronous step (and $n-1$ vertices initially do not have one
of the {\bitstring}s).

As far as the number of transmissions is concerned, we can derive some upper bounds for \texttt{fwr}. Those are
presented in Theorem \ref{thm:sasyncmaxfwr} (which is proved in Appendix~\ref{app:sasynchmaxfwr}) and Corollary \ref{cor:sasyncmaxfwr}.

\begin{theorem}
The number of transmissions required to solve
the covering problem using \texttt{fwr} on $\graph$ is
\[
\cN(\graph) \leq m |\edges| \; .
\]
\label{thm:sasyncmaxfwr}
\end{theorem}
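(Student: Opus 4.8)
The plan is to show that the total number of transmissions ever made under \texttt{fwr} is bounded by $m|\edges|$ on every sample path, which simultaneously establishes the bound for all three covering problems (and in particular shows \texttt{fwr} solves the sure covering problem). The key observation is that in \texttt{fwr} each vertex $v$ maintains the list $\mathcal{L}(v) \subseteq \cC(v) \times \cS(v)$, and whenever $v$ actually transmits, it selects a pair $(\bldc, v')$ from $\left(\cC(v)\times\cS(v)\right)\backslash\mathcal{L}(v)$ and then immediately inserts $(\bldc,v')$ into $\mathcal{L}(v)$. Under the weaker version of the algorithm assumed for the analysis, $\mathcal{L}(v)$ is updated \emph{only} on sending, never on receiving, so $\mathcal{L}(v)$ grows by exactly one element each time $v$ transmits and never shrinks. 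Hence the number of transmissions made by $v$ over the entire execution is at most the largest size $\mathcal{L}(v)$ can attain.

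The next step is to bound $|\mathcal{L}(v)|$. Since $\mathcal{L}(v) \subseteq \cC(v)\times\cS(v)$ at all times, and $|\cS(v)| = d^{out}(v)$ is fixed while $|\cC(v)|$ can only increase (vertices never forget {\bitstring}s) up to the total number $m$ of distinct {\bitstring}s in the network, we get $|\mathcal{L}(v)| \le m\, d^{out}(v)$ throughout. Therefore the number of transmissions originating at $v$ is at most $m\, d^{out}(v)$. Summing over all vertices,
\[
\cN(\graph) \;\le\; \sum_{v\in\vertices} m\, d^{out}(v) \;=\; m \sum_{v\in\vertices} d^{out}(v) \;=\; m\,|\edges|,
\]
where the last equality is the handshaking identity for directed graphs: each edge $(v,v')$ contributes exactly $1$ to $d^{out}(v)$.

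I anticipate the only subtlety — and the reason the statement references the weaker variant of \texttt{fwr} described just before the theorem — is the bookkeeping of what actually goes into $\mathcal{L}(v)$. If one used the full algorithm, in which received pairs are also inserted into $\mathcal{L}(v)$, then $|\mathcal{L}(v)|$ could still be bounded by $m\,(d^{out}(v) + d^{in}(v))$, but the clean bound $m\,d^{out}(v)$ per vertex — hence $m|\edges|$ overall — requires that insertions into $\mathcal{L}(v)$ happen only on sends, so that each insertion is in one-to-one correspondence with a transmission by $v$. Making this correspondence precise (one transmission $\leftrightarrow$ one new element of $\mathcal{L}(v)$, and every element of $\mathcal{L}(v)$ lies in $\cC(v)\times\cS(v)$) is the crux; everything else is the handshaking sum. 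One should also note that the bound is independent of the schedule of clock ticks and of the random choices, which is exactly why it yields sure covering rather than merely an in-expectation or high-probability guarantee.
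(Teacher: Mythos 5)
Your proposal is correct and follows essentially the same argument as the paper: each transmission by $v$ consumes a distinct pair of $\cC(v)\times\cS(v)$ added to $\mathcal{L}(v)$, so transmissions from $v$ are at most $m|\cS(v)| = m\,d^{out}(v)$, and summing over vertices gives $m|\edges|$ (the paper phrases this as the potential $m|\cS(v)|-|\mathcal{L}(v)|$ decreasing by one per transmission, which is the same counting). The only quibble is your side remark about the full algorithm: even when received pairs are inserted into $\mathcal{L}(v)$, each transmission still selects a distinct element of $\cC(v)\times\cS(v)$, so the bound $m\,d^{out}(v)$ per vertex (hence $m|\edges|$) survives without weakening to $m(d^{out}(v)+d^{in}(v))$.
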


\begin{corollary}
As a direct consequence of Theorem \ref{thm:sasyncmaxfwr}, we have:
\[
\cN(\graph) \leq m n \cdot d^{out}(\graph) \;  .
\]
\label{cor:sasyncmaxfwr}
\end{corollary}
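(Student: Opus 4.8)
The plan is to obtain the bound directly from Theorem~\ref{thm:sasyncmaxfwr} by controlling the number of edges $|\edges|$ in terms of $n$ and $d^{out}(\graph)$. Theorem~\ref{thm:sasyncmaxfwr} already gives $\cN(\graph) \le m|\edges|$, so it suffices to show that $|\edges| \le n \cdot d^{out}(\graph)$.

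First I would note the elementary counting identity for directed graphs: every directed edge $(u,v) \in \edges$ contributes exactly once to the out-degree of its tail $u$, so that $|\edges| = \sum_{v \in \vertices} d^{out}(v)$. Since by definition $d^{out}(v) \le d^{out}(\graph)$ for each $v \in \vertices$, and $|\vertices| = n$, summing over the $n$ vertices gives $|\edges| \le n \cdot d^{out}(\graph)$.

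Substituting this estimate into the inequality $\cN(\graph) \le m|\edges|$ of Theorem~\ref{thm:sasyncmaxfwr} yields $\cN(\graph) \le m n \cdot d^{out}(\graph)$, which is the claimed bound.

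There is essentially no obstacle here: the statement is a one-line consequence of the already-established Theorem~\ref{thm:sasyncmaxfwr} together with the trivial edge-count bound. The only point deserving a moment's attention is that the identity $|\edges| = \sum_{v} d^{out}(v)$ is the correct one in the directed setting (each edge counted once, in the out-degree of its tail), rather than the undirected handshake identity.
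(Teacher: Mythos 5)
Your proof is correct and follows the same route the paper intends: the corollary is obtained from Theorem~\ref{thm:sasyncmaxfwr} by the edge count $|\edges| = \sum_{v \in \vertices} d^{out}(v) \leq n \cdot d^{out}(\graph)$. Nothing further is needed.
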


It follows from \eqref{eq:lowerbound} and Corollary \ref{cor:sasyncmaxfwr} that
\texttt{fwr} is asymptotically optimal (up to a multiplicative constant)
in the number of required transmissions
when applied to a family of graphs with bounded out
degree and with a bounded number of initial {\bitstring}s.

%\subsection{Number of transmissions for algorithm \texttt{for} on general graphs}
\medskip 
Let us now focus on estimating bounds for the number of transmissions required when
using \texttt{for} for any family of graph with bounded in-degree and
bounded number of initial {\bitstring}s. (See Appendix~\ref{app:chain} for an analysis tailored specifically to the family of chain graphs.)

\begin{theorem}
Assume that \texttt{for} is used in asynchronous time setting on the graph $\graph$. 
Then, there exists vertex $v \in \cV$, such that the probability that $v$ received all 
{\bitstring}s after at most $s$ transmissions, $P^*_{s}(v)$, satisfies:
\[ 
P^*_{s}(v) \leq 1-
\exp\left(-\frac{2 s \cdot d^{out}(\graph) \cdot d^{in}(v)}{n}\right) \;  . 
\]
\label{thm:sasyncfor}
\end{theorem}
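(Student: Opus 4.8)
The plan is to find a single vertex $v$ that is hard to reach and bound below the number of messages it receives after $s$ global clock ticks. The natural candidate is a vertex $v$ lying on the far end of a shortest path realizing the bitstring diameter, but for this particular bound the key is simply to pick $v$ with small in-degree and to track how often messages flow into $v$. The rough idea: for $v$ to hold all $m$ bitstrings, in particular it must have received at least one message; more to the point, we will argue that the only way $v$ can receive \emph{new} content is through its $d^{in}(v)$ predecessors, and that each such predecessor only rarely both (a) has its clock tick and (b) elects to call $v$ rather than one of its other successors, and (c) sends something $v$ does not already have. Since $v$ needs $m$ \emph{distinct} bitstrings, the probability of having all of them is at most the probability that at least (some function of $m$) useful transmissions have reached $v$.

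The concrete steps I would carry out: First, fix $v$ to be any vertex — the statement only claims existence, so I would take $v$ attaining $d^{in}(v) \le d^{in}(\graph)$, or more usefully a vertex whose predecessors have their other successors "competing" for their attention; in the simplest version any $v$ works and $d^{in}(v)$ appears in the bound, so I just fix an arbitrary $v$. Second, set up the probabilistic model: over $s$ clock ticks, each tick is i.i.d.\ uniform over the $n$ vertices, so the number of ticks assigned to the predecessors of $v$ is a sum of indicators; in expectation it is $s \cdot d^{in}(v)/n$. Third, each time a predecessor $u \in \cP(v)$ ticks, under \texttt{for} it picks a successor uniformly from $\cS(u)$, so it calls $v$ with probability $1/d^{out}(u) \ge 1/d^{out}(\graph)$ — wait, that is the wrong direction; it calls $v$ with probability $1/|\cS(u)| \le 1/1$, and we want an \emph{upper} bound on how often $v$ is called, so I would note $1/|\cS(u)|$ can be as small as needed but we only get a clean bound by combining. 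The honest route: the expected number of transmissions \emph{into} $v$ over $s$ ticks is at most $\sum_{u \in \cP(v)} (\text{ticks at } u) \cdot \tfrac{1}{|\cS(u)|}$, and to extract the stated factor $d^{out}(\graph)$ I suspect the intended argument actually bounds the number of \emph{distinct} bitstrings $v$ can have ever received, using that a predecessor must itself have acquired a bitstring before it can forward it — so the $d^{out}(\graph)$ enters through a recursive/chaining step counting how content propagates along a path of length up to the diameter toward $v$. Fourth, having bounded (in expectation) the number $N$ of useful incoming transmissions at $v$ after $s$ ticks by something like $c \cdot s \cdot d^{out}(\graph) \cdot d^{in}(v)/n$, observe that $v$ has all $m$ bitstrings only if $N \ge m$ (actually only if $N \ge 1$ in the crudest form, but the inequality as stated seems to come from $\Prob[v \text{ covered}] \le \Prob[N \ge 1] \le \Ex[N]$ is too weak; rather, I would use that the probability $v$ is \emph{not} covered is at least $\Prob[\text{no predecessor of } v \text{ ever called } v]$, and a Poisson-thinning computation gives exactly $\exp(-\lambda)$ for the appropriate rate $\lambda$). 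Concretely: the ticks at a fixed predecessor $u$ that result in a call to $v$ form a thinned Poisson process of rate $\tfrac{1}{n}\cdot\tfrac{1}{|\cS(u)|}$; over $s$ ticks (equivalently time $\approx s/n$ in the continuous-time picture, but here $s$ is the tick count) the probability that \emph{no} such call ever happens is $\big(1 - \tfrac{1}{n|\cS(u)|}\big)^{s} \ge \exp\big(-\tfrac{2s}{n|\cS(u)|}\big)$ using $1-x \ge e^{-2x}$ for small $x$; summing the rates over the $d^{in}(v)$ predecessors and using $|\cS(u)| \ge 1$, wait — this gives $\exp(-2s\, d^{in}(v)/n)$ with no $d^{out}$ factor, so the $d^{out}(\graph)$ must be compensating for the possibility that a predecessor calls $v$ but sends a bitstring $v$ already has; bounding the \emph{number of distinct useful} calls needed multiplies the effective threshold and introduces $d^{out}(\graph)$ via the branching of content. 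I would reconcile this by bounding the probability that $v$ sees all $m$ bitstrings by the probability that the total count of incoming transmissions (useful or not) over $s$ ticks is at least $m$, then — no; I think the clean statement is: there is a vertex $v$ such that at least one of its predecessors, call it $u$, must call $v$ at least once for $v$ to be covered, and moreover $u$ itself must first be covered, and $u$ spreads its attention over $d^{out}(u)$ successors; unwinding one level and using independence of Poisson thinning across distinct ticks gives the bound with the product $d^{out}(\graph)\cdot d^{in}(v)$.

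The main obstacle, and the step I expect to require care, is exactly the appearance of the factor $d^{out}(\graph)$: a naive "$v$ must be called at least once" argument yields only $\exp(-2s\,d^{in}(v)/n)$, so the $d^{out}(\graph)$ must come either from (i) the requirement that $v$ receive $m$ \emph{distinct} messages while its predecessors, each with up to $d^{out}(\graph)$ successors, dilute their transmissions, or (ii) a one-step lookback showing a predecessor of $v$ must itself be covered, each predecessor acquiring content at a rate throttled by its out-degree. I would resolve it by carefully choosing $v$ to be a vertex such that \emph{all} paths delivering content to $v$ must pass through a bottleneck predecessor whose out-degree forces the $1/d^{out}(\graph)$ thinning, then applying the inequality $1-x \ge \exp(-2x)$ (valid for $0 \le x \le 1/2$, which holds here since $s$ may be taken in the regime of interest or the bound is trivially true otherwise) to the surviving Poisson rate $\tfrac{s}{n}\cdot\tfrac{d^{in}(v)}{d^{out}(\graph)}$ — no, that has $d^{out}$ in the denominator. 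Given the direction of the inequality in the theorem (an \emph{upper} bound on the success probability, i.e.\ a \emph{lower} bound on the failure probability), and that larger $d^{out}(\graph)$ makes the bound \emph{weaker}, the factor $d^{out}(\graph)$ in the numerator is consistent with "more successors at the predecessors $\Rightarrow$ $v$ needs more total tick-time $\Rightarrow$ the bound is less useful but still valid", which comes out of requiring $\Omega(d^{out}(\graph))$ incoming calls to collect all distinct bitstrings when predecessors repeat. I would therefore structure the final computation as: $\Prob[v \text{ covered after } s] \le \Prob[\text{at least } k \text{ calls into } v] $ for an appropriate $k$, bound the latter via a Poisson/Markov estimate with mean $\le \tfrac{s\, d^{in}(v)}{n}$, and choose the inequality $1-x\ge e^{-2x}$ to package the complement as $1-\exp(-2s\,d^{out}(\graph)d^{in}(v)/n)$, deferring the precise accounting of $k$ and the distinctness to the appendix.
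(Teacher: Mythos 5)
Your proposal never converges to a complete argument, and the route you keep circling misses the paper's key device. The theorem is stated per \emph{transmission}, not per clock tick, and your computation (each tick lands on a predecessor of $v$ with probability $d^{in}(v)/n$, then thin by the successor choice) bounds a different quantity: conditioned on a transmission actually occurring, the sender is (after reducing to $m=1$) uniform over the currently informed set $C$, so the probability that this transmission is directed into a \emph{fixed} $v$ can be as large as $1/d^{out}(\graph)$ while $|C|$ is small --- it does not scale with $n$ at all. This is exactly why the statement is existential: the paper runs \texttt{for} until the informed set first reaches size at least $n/2$ (say after $s_0$ transmissions), chooses $v$ to be a vertex still uninformed at that moment, and only then bounds the probability that any single subsequent transmission informs $v$ by $d^{out}(\graph)\,d^{in}(v)/|C| \le 2\,d^{out}(\graph)\,d^{in}(v)/n$; independence across transmissions and $1-x \ge e^{-2x}$ then give $P^*_{s}(v) \le 1-\exp\bigl(-2s'\,d^{out}(\graph)\,d^{in}(v)/n\bigr)$ with $s'\le s$. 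Your step ``in the simplest version any $v$ works, so I just fix an arbitrary $v$'' is therefore wrong for the statement as written, and the simple reduction to $m=1$ (more {\bitstring}s only require more transmissions, which is why $m$ is absent from the bound) removes all of your agonizing over collecting $m$ distinct messages.

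Relatedly, your search for the source of the factor $d^{out}(\graph)$ ends in the wrong places (distinctness of the $m$ messages, a recursive one-step lookback, a bottleneck predecessor). In the paper it enters simply because, conditioned on a transmission, the edges leaving informed vertices are not used uniformly: a sender of out-degree $d$ puts weight $1/(|C|\,d)$ on each of its outgoing edges, so the ratio between the most and least likely edge is at most $d^{out}(\graph)$; comparing the at most $d^{in}(v)$ edges into $v$ with the at least $|C|$ edges leaving $C$ (strong connectivity gives every informed vertex out-degree at least $1$) yields the per-transmission bound $d^{out}(\graph)\,d^{in}(v)/|C|$. A direct summation even gives the sharper $d^{in}(v)/|C|$, so the $d^{out}(\graph)$ factor merely loosens the bound rather than being the crux you took it for --- but your proposal, which explicitly defers ``the precise accounting of $k$ and the distinctness'' and never settles whether $s$ counts ticks or transmissions, does not establish the inequality in either form.
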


\begin{proof}
First, note that the expression does not depend on the number
of initial {\bitstring}s. It is obvious that a larger
number of {\bitstring}s requires more transmissions
using \texttt{for} and thus we focus on the simple case where $m=1$.
Let $\bldc$ be the corresponding {\bitstring}. 

It will be convenient to say that a vertex of a graph is \emph{contaminated} if it
has obtained the {\bitstring} $\bldc$. After $s$ transmissions, the set of
contaminated vertices in the graph $\graph$ is denoted $C_s$.
Assume that $\graph$ contains at least 2 vertices, and take some $k \in \nn$ such that $n \geq k \geq 1$. 
Run \texttt{for} until the point when  
$|C_s|=k$ for the first time, and let $s$ be the corresponding number of transmissions 
($s$ is well defined since the size of $C_s$ increases by at most one after
each transmission).
The probability
that the next vertex to receive the {\bitstring} at the iteration $s+1$ is a given vertex $v$ can be
upper bounded. As a matter of fact, the algorithm will use any edge at
random among those starting at vertices in $C_s$. We note that,
since $|C_s|=k$ and since the graph is strongly connected, the number
of such edges is at least $k$ (recall that for each $v\in C_s$, we have $d^{out}(v) \geq 1$). 
At most $d^{in}(v)$ of these edges are
connected to a given vertex $v \not\in C_s$. Note that in
contrast to the vertices that are drawn uniformly in the asynchronous
time setting version of \texttt{for}, the probability to use one edge is not
uniform if the degrees of vertices are not equal. However, the ratio between the probability to
choose the most probable edge and the less probable edge is at most
$d^{out}(\graph)$. Therefore, the conditional probability
that the vertex $v$ will receive the {\bitstring} at the transmission $s+1$ is
\begin{eqnarray*}
P_{s+1}(v) &\leq& \frac{\big| \text{edges connecting a vertex in $C_s$ to $v$} \big| \cdot d^{out}(\graph)}
{\big| \text{edges starting at vertices in $C_s$} \big|}\\
&\leq& \frac{d^{out}(\graph) \cdot d^{in}(v)}{k}\;.
\end{eqnarray*}
(This probability is conditioned on the assumption that $|C_{s}| = k$, and that $v \notin C_s$, as we mentioned above.)

Now, run \texttt{for} until the point when  
$|C_s| \geq n/2$ for the first time, and let $s_0$ be the corresponding number of transmissions. 
Select a vertex $v$ that was not contaminated in the course of the
first $s_0$ transmissions (which is well defined since $n\geq 2$). The conditional 
probability (conditioned on the assumptions $|C_{s_0}| \geq n/2$ and $v \notin C_s$) that $v$ is contaminated after
$s_0 + 1$ transmissions is therefore:
\begin{equation}
P_{s_0+1} (v) \leq \frac{2 \cdot d^{out}(\graph) \cdot d^{in}(v)}{n}\;.
\label{eq:pv}
\end{equation}

Since all transmissions are independent of each other, the probability that $v$ is contaminated after 
$s=s_0 + s'$ transmissions is:
\begin{eqnarray*}
P^*_{s}(v) &\leq& 1-\left(1-\frac{2 \cdot d^{out}(\graph) \cdot d^{in}(v)}{n}\right)^{s'} \nonumber \\
&\leq & 1- \exp\left(-\frac{2s' \cdot d^{out}(\graph) \cdot d^{in}(v)}{n}\right)\;.
\end{eqnarray*}
Since $s'\leq s$, this completes the proof.
\end{proof}

%\medskip
From Theorem~\ref{thm:sasyncfor} we arrive at the following conclusion.

\begin{corollary}
Suppose that the algorithm \texttt{for} is applied in the asynchronous setting to 
a family of graphs with bounded in- and out-degree. Then the number of
transmissions $s$ for almost sure covering of the graph satisfies:
\begin{equation}
s=\omega(n)\;.
\end{equation}
\label{cor:asyncfor}
\end{corollary}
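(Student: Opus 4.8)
The plan is to argue by contraposition, feeding the two uniform degree bounds from the hypothesis directly into Theorem~\ref{thm:sasyncfor}. Recall that ``$\family$ has bounded in- and out-degree'' gives constants $\dsf,\hat{\dsf}\in\nn$ with $d^{out}(\graph_i)\le\dsf$ and $d^{in}(v)\le d^{in}(\graph_i)\le\hat{\dsf}$ for every $\graph_i\in\family$ and every vertex $v$ of $\graph_i$. Fix an arbitrary constant $\gamma>0$, and suppose some $\graph_i\in\family$ on $n$ vertices is covered by \texttt{for} within $s\le\gamma n$ transmissions. Theorem~\ref{thm:sasyncfor} supplies a vertex $v$ with
\[
P^*_{s}(v)\;\le\; 1-\exp\!\left(-\frac{2s\cdot d^{out}(\graph_i)\cdot d^{in}(v)}{n}\right)\;\le\;1-\exp\!\left(-\frac{2s\,\dsf\,\hat{\dsf}}{n}\right)\;\le\;1-\exp\!\left(-2\gamma\,\dsf\,\hat{\dsf}\right).
\]
Since completing the covering in particular forces $v$ to hold all $m$ {\bitstring}s, the probability that $\graph_i$ is fully covered within $s$ transmissions is at most $P^*_{s}(v)$, hence at most $1-\exp(-2\gamma\,\dsf\,\hat{\dsf})$ — a constant strictly less than $1$ that does not depend on $n$.

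Next I would convert this into the claimed asymptotics. Almost sure covering (in the asymptotic sense in which one compares these algorithms) requires the probability of full covering within $s=s(n)$ transmissions to tend to $1$ as $n\to\infty$; the displayed bound shows this is impossible whenever $s(n)\le\gamma n$ for infinitely many $n$, because along such a subsequence the covering probability stays below the fixed constant $1-\exp(-2\gamma\,\dsf\,\hat{\dsf})<1$. Therefore, for \emph{every} constant $\gamma>0$ one must have $s(n)>\gamma n$ for all sufficiently large $n$, i.e. $s(n)/n\to\infty$, which is exactly $s=\omega(n)$.

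Finally I would spell out the one point that needs a touch of care: the precise meaning of ``the number of transmissions $s$ for almost sure covering.'' I would make explicit that $s=s(n)$ denotes any schedule length after which \texttt{for} covers $\graph_i$ with probability approaching $1$, and note that the event ``$v$ has all {\bitstring}s within $\le s$ transmissions'' is monotone in $s$, so applying the bound at the threshold value is legitimate. I do not anticipate a real obstacle here: the argument is essentially a one-line substitution of $\dsf,\hat{\dsf}$ into Theorem~\ref{thm:sasyncfor} together with the observation that $\exp(-2\gamma\,\dsf\,\hat{\dsf})$ is bounded away from $0$ uniformly in $n$. As a closing remark I would add that, since the number of transmissions never exceeds the delay in the asynchronous model, the same $\omega(n)$ lower bound holds a fortiori for the covering \emph{delay} of \texttt{for}, and — invoking Lemma~\ref{lemma:ncafor} — for \texttt{nca} as well.
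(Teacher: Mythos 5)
Your proposal is correct and follows essentially the same route as the paper: both plug the uniform bounds $\dsf,\hat{\dsf}$ into Theorem~\ref{thm:sasyncfor} and observe that the covering probability can only approach one if $s/n \to \infty$. Your contrapositive phrasing (fixing $\gamma$ with $s\le\gamma n$ and noting the covering probability stays below the constant $1-\exp(-2\gamma\dsf\hat{\dsf})$) is just a more explicit rendering of the paper's terse ``$2s\,d^{out}(\graph)\,d^{in}(v)/n \to \infty$'' step, so no substantive difference.
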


\begin{proof}
Observe that almost sure covering implies that every vertex $v \in \graph_i$ 
receives all the $m$ messages with probability one. 
By picking a vertex $v$ and applying analysis as in the proof of Theorem~\ref{thm:sasyncfor}, we
obtain that 
\[
\frac{2 s \cdot d^{out}(\graph) \cdot d^{in}(v)}{n}\to \infty \; .
\]

Denote by $\family = \{ \graph_i \}_{i=1}^\infty$ the family of
graphs of bounded degree. Then, 
\[
\exists \hat{\dsf} \in \nn \quad \mbox{ such that } \quad
\forall i, \; \forall v\in \graph_i \; : \; d^{in}(v) \leq
d^{in}(\graph_i) \leq \hat{\dsf}
\]
and 
\[
\exists {\dsf} \in \nn \quad \mbox{ such that } \quad
\forall i \; : \; d^{out}(\graph_i) \leq {\dsf} \; , 
\] 
and thus we obtain 
\begin{equation}
\dsf \cdot \hat{\dsf} \cdot \frac{s}{n} \to \infty \; .
\label{infty}
\end{equation}
If $\lim_{n \to \infty} \frac{s}{bn} \le 1$, for some constant $b \ge 0$, 
condition~(\ref{infty}) does not hold. Therefore, we conclude that 
$s$ grows faster than $bn$ for any $b \ge 0$, i.e. $s = \omega(n)$.    

\end{proof}

Using Corollaries \ref{cor:sasyncmaxfwr}, \ref{cor:asyncfor} and Lemma
\ref{lemma:ncafor}, we
conclude that the ratio of the number of transmissions needed to assure almost
sure covering using \texttt{fwr} to that using \texttt{for} or that
using \texttt{nca} tends to
zero when considering a family of graphs with bounded in- and out-degree and
bounded number of {\bitstring}s. Table \ref{table:async} below summarizes the main results of this extended abstract.

\begin{table}
\begin{center}
\begin{tabular}{|c|c|c|c|}
\hline
&\texttt{for}&\texttt{nca}&\texttt{fwr}\\
\hline
Expected covering & $\omega(n)$ & $\omega(n)$ &$\Theta(n)$\\
\hline
Almost sure covering & $\omega(n)$ & $\omega(n)$&$\Theta(n)$\\
\hline
Sure covering & N/A & N/A & $\Theta(n)$\\
\hline
\end{tabular}
\end{center}
\caption{Comparison of the number of transmissions for the
  different algorithms and different problems when considering
  asynchronous time setting and a family of graphs with bounded in- and
  out-degree and bounded number of initial {\bitstring}s.}
\label{table:async}
\end{table}

\appendix
\section{Proof of Theorem~\ref{thm:almostsureversusexpected}} \label{app:proofOfThm1}

First, we prove the loose inclusions:
\begin{itemize}
\item It is clear that sure covering is more difficult than almost
sure covering since if all vertices have all {\bitstring}s, they also
have all of them
with probability one. 
\item Suppose that \texttt{alg} is an algorithm that solves
almost sure covering.
After using \texttt{alg} on a graph, let us focus on a {\bitstring} $\bldc \in \kk^t$ and a
vertex $v \in \cV$. Denote by $\cT_\bldc(v)$ the random variable that
counts the number of times $v$ received the {\bitstring} $\bldc$ during the execution of \texttt{alg}
for infinite number of steps. The
probability that $v$ received $\bldc$ at least
one time is:
\[
P(\cT_\bldc(v) \ge 1) = \sum_{k=1}^\infty {P(\cT_\bldc(v) = k)} = 1 \;.
\]

On the other hand, the expected number of times $v$ received $\bldc$ is:
\begin{eqnarray*}
\Ex[\cT_\bldc(v)] &=& \sum_{k=1}^\infty{k \cdot P(\cT_\bldc(v) = k)} \nonumber \\
&=& \sum_{k=1}^\infty {P(\cT_\bldc(v) = k)} + \sum_{k=2}^\infty (k-1) \cdot P(\cT_\bldc(v) = k) \\
& \ge & \sum_{k=1}^\infty {P(\cT_\bldc(v) = k)} \\
& = & 1 \; . 
\label{eq:expected-covering}
\end{eqnarray*}
Thus \texttt{alg} always solves expected covering.
\end{itemize}

Next, let us prove the strict part of the theorem statement:
\begin{itemize}
\item Consider a graph that consists of two vertices $v$ and $v'$,
  and such that $\edges = \{ (v,v'), (v',v) \}$. Suppose that $v$ initially has 
  some {\bitstring} $\bldc$. We use the following algorithm: when $v$ can
  transmit (which happens infinitely often under both the synchronous and asynchronous model), it chooses
  with constant probability $0 < \alpha <1$ to send its {\bitstring} $\bldc$ to
  $v'$. This algorithm trivially solves almost sure covering as the
  probability that $v'$ receives the {\bitstring} $\bldc$ after $s$ steps is:
\[P_{v'}(s) = 1-(1-\alpha)^{s} \to 1\;.\]

On the other hand, it may happen with probability 0 that at each step,
$v$ does not communicate with $v'$ and thus this algorithm does not
solve the sure covering problem.
\item Using the same graph, let us now consider the following
  algorithm: the first time $v$ transmits, it draws a random
  variable with probability 1/2 to be one and probability 1/2 to be
  zero. If the obtained value is one, then $v$ sends its {\bitstring} to $v'$ exactly two times in a row; 
  if it is zero, then it never sends
  its {\bitstring} to $v'$. Trivially, the probability that $v'$ receives the {\bitstring}
  is 1/2, which means that this algorithm does not solve the almost sure
  covering problem. On the other hand, the expected number of times $v'$
  receives the {\bitstring} is $2 \cdot 1/2 = 1$, which means that this algorithm
  solves the expected covering problem.
\end{itemize}

\section{Proof of Theorem~\ref{thm:syncmin}} \label{app:syncmin}

Let $v \in \cV$ be a vertex that initially has
some {\bitstring} $\bldc$ and let $v' \in \cV$ be another vertex, such that there
exists a shortest path starting at $v$ and ending at $v'$ of length
$D_c(\graph)$. We use the fact that if the probability that $v'$ received
$\bldc$ is zero, then the expected number of times $v'$ received $\bldc$ is
also zero. We now proceed by induction:
\begin{itemize}
\item At the beginning, the only vertex in the graph that
  has a nonzero probability to have $\bldc$ is $v$.
\item After step $s$, only the vertices that
  are at a distance less than or equal to $s$ from $v$ have a nonzero probability
  of having received $\bldc$. At step $s+1$, the gossip transmitter 
  constraint implies that only the successors of
  vertices that have $\bldc$ may have a nonzero probability to receive~$\bldc$.
\end{itemize}

\section{Proof of Theorem~\ref{thrm:tree}} \label{app:tree}

 First, observe that the graph $\graph$ is covered if all its leaves have obtained 
the {\bitstring} $\bldc$. This is due to the fact that every vertex in the tree $\graph$ is located on the 
path from the root $v_0$ to some leaf. 

Second, let us estimate the number of transmission rounds required to deliver $\bldc$ to a particular leaf $v$. 
Let $\Phi$ be a path from $v_0$ to $v$, where $\Phi = (v_0, v_1, v_2, \cdots, v_\ell = v)$. 
In the worst case scenario, $v_0$ will send $\bldc$ to all its other successors before it sends $\bldc$ to $v_1$. 
Then, in the words case scenario, $v_1$ sends $\bldc$ to all its other successors before it sends it to $v_2$, and so on. 
In total, {\bitstring} $\bldc$ will be sent to $\sum_{v' \in \Phi} d^{out}(v')$ vertices when it reaches $v$. 

The final expression is obtained by taking the leaf $v$ which
maximizes the required expression. 

\section{Proof of Theorem~\ref{thm:sasyncmaxfwr}} \label{app:sasynchmaxfwr}

To obtain this result, we first observe that the quantities (for all $v \in \cV$) $m |\cS(v)| - |\mathcal{L}(v)|$ are 
non-increasing with the number of transmissions. More precisely, at each transmission, one of the quantities
$m |\cS(v)| - |\mathcal{L}(v)|$ decreases by one. Therefore, an upper bound for
the number of steps before all $\mathcal{L}(v)$'s are empty is:
\begin{eqnarray*}
\cN(\graph) &\leq& \sum_{v\in\vertices}{m |\cS(v)| - |\mathcal{L}(v)|}\\
&\leq& m \sum_{v\in\vertices}{|\cS(v)|}\\
&\leq & m |\edges|
\end{eqnarray*}

\section{Analysis of \texttt{for} on the chain graph} \label{app:chain}

\begin{theorem}
Solving expected covering problem using \texttt{for} on a family of graphs
with bounded out-degree in the asynchronous time setting requires
$\omega(n)$ transmissions.
\end{theorem}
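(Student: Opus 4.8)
The plan is to prove the bound by analysing \texttt{for} on one concrete family of bounded-out-degree graphs, namely the directed symmetric chains on vertices $v_1,\dots,v_n$ (edges $(v_i,v_{i+1})$ and $(v_{i+1},v_i)$, so $d^{out}\le 2$); this suffices because the statement only asserts the existence of such a family. Exactly as in the proof of Theorem~\ref{thm:sasyncfor}, using more {\bitstring}s only increases the number of transmissions \texttt{for} needs, so I take $m=1$ with the single {\bitstring} $\bldc$ initially held by the endpoint $v_1$. I will then show that $\Ex[\cT_\bldc(v_n)]$, restricted to the first $s$ transmissions, stays strictly below $1$ for all $s$ up to $\Omega(n^2)$; since expected covering after $s$ transmissions requires this quantity to be at least $1$ for \emph{every} vertex, it follows that it cannot be achieved with $s=O(n)$ transmissions.

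First I would record the structure of \texttt{for} on the chain. Since $\bldc$ starts at $v_1$ and a vertex becomes contaminated only through a neighbour, the contaminated set after $t$ transmissions is always an interval $\{v_1,\dots,v_{K_t}\}$, with $K_t$ nondecreasing in $t$ and increasing by at most one per transmission; in particular only $v_{n-1}$ can deliver $\bldc$ to $v_n$, so $\cT_\bldc(v_n)$ counts the transmissions $v_{n-1}\to v_n$. Conditioning on the history just before transmission $t$, the transmitter is uniform over the $K_{t-1}$ contaminated vertices (uncontaminated vertices never transmit under \texttt{for}), which yields the two facts I need: (i) while $K_{t-1}=k$ with $2\le k\le n-2$, the transmission increments $K$ with probability exactly $\tfrac{1}{2k}$, independently of the past; and (ii) the transmission is $v_{n-1}\to v_n$ only when $v_{n-1}$ is already contaminated, so that $K_{t-1}\ge n-1$, and then its conditional probability is at most $\tfrac{1}{2(n-1)}$.

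Next, let $\tau$ be the number of transmissions at which $K$ first reaches $n-1$. By (i), $\tau=1+\sum_{k=2}^{n-2}G_k$ with the $G_k$ independent and $G_k\sim\mathrm{Geom}(\tfrac{1}{2k})$, so $\Ex[\tau]=(n-1)(n-2)-1$, which exceeds $n^2/4$ for $n\ge 4$, while $\mathrm{Var}(\tau)=\sum_{k=2}^{n-2}(4k^2-2k)\le 4n^3$. Chebyshev's inequality then gives, for every $s\le n^2/8$, $P(\tau\le s)\le\mathrm{Var}(\tau)/(\Ex[\tau]-s)^2\le 256/n$. Combining this with (ii), for every such $s$,
\begin{eqnarray*}
\Ex\big[\cT_\bldc(v_n)\text{ after }s\text{ transmissions}\big]
&=&\sum_{t=1}^{s}P\big(\text{transmission }t\text{ is }v_{n-1}\to v_n\big)\\
&\le&\frac{1}{2(n-1)}\sum_{t=1}^{s}P(\tau\le t-1)\ \le\ \frac{s}{2(n-1)}\,P(\tau\le s)\ \le\ \frac{128\,s}{n(n-1)},
\end{eqnarray*}
which is strictly below $1$ whenever $s<n(n-1)/128$.

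Consequently, for any constant $c$ and all sufficiently large $n$, $s=cn$ transmissions do not achieve expected covering; hence the number of transmissions required is $\omega(n)$ (indeed $\Omega(n^2)$). The main obstacle is precisely the step from $\Omega(n)$ to $\omega(n)$: fact (ii) alone — $v_{n-1}$ forwards to $v_n$ with probability at most $\tfrac{1}{2(n-1)}$ per transmission — yields only a linear lower bound, and the improvement rests on the observation that $v_{n-1}$ is typically not contaminated until $\Theta(n^2)$ transmissions have occurred, which the second-moment (Chebyshev) estimate on $\tau$ makes precise.
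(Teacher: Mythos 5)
Your proof is correct, and it establishes the stronger conclusion that $\Omega(n^2)$ transmissions are needed on the chain, not merely $\omega(n)$. You use the same witness family as the paper (the symmetric directed chain with the single \bitstring{} at an endpoint, $m=1$), and your two structural observations --- that the contaminated set is an interval, and that conditioned on a transmission occurring the transmitter is uniform over the contaminated vertices --- are exactly the facts the paper also relies on. Where you genuinely diverge is in how the $\omega(n)$ bound is extracted. The paper conditions on the first time half the chain is contaminated, bounds the per-transmission probability that the next frontier vertex is reached by $1/n$, and then iterates a product-of-probabilities estimate across the remaining $\sim n/2$ vertices to show that within $bn$ further transmissions the probability that the \bitstring{} ever reaches $v_{n-1}$ is roughly $e^{-(n/2)e^{-b}}$, whence the expected number of deliveries to $v_n$ vanishes. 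You instead decompose the hitting time $\tau$ of $v_{n-1}$ into a sum of independent geometric variables $G_k\sim\mathrm{Geom}(\tfrac{1}{2k})$, compute $\Ex[\tau]=\Theta(n^2)$ and $\mathrm{Var}(\tau)=O(n^3)$, and apply Chebyshev to show $P(\tau\le s)=O(1/n)$ for $s\le n^2/8$, then finish with a clean first-moment computation $\Ex[\cT_\bldc(v_n)]\le\frac{s}{2(n-1)}P(\tau\le s)$. Your route is tighter and easier to audit: the geometric decomposition is exact (not an upper bound iterated through a chain of conditional events), the $\Omega(n^2)$ threshold is made explicit, and you avoid the paper's somewhat informal asymptotic step $\left[1-(1-1/n)^{bn}\right]^{n-1-i}\sim e^{-(n/2)e^{-b}}$. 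The paper's argument, in exchange, gives an exponentially small bound on the delivery probability at horizon $bn$ rather than your polynomial $O(1/n)$ bound, but that extra strength is not needed for the statement. One small caveat worth stating explicitly if you polish this up: the theorem as phrased quantifies over ``a family of graphs with bounded out-degree,'' and both your proof and the paper's establish it only for the specific chain family (the general bounded-degree claim is what Corollary~\ref{cor:asyncfor} handles via Theorem~\ref{thm:sasyncfor}); you flag this interpretation correctly.
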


\begin{proof}
Take the symmetric directed chain (di)graph depicted in Figure \ref{fig:linegraph} that
consists of $n+1$ vertices. This graph is defined by $\edges= \{(v_i, v_{i+1})\}_{i=0}^{n-1}\cup
\{(v_{i+1},v_{i})\}_{i=0}^{n-1}$. If the vertex $v_0$ owns the {\bitstring} $\bldc$, it takes 
at least $n$ transmissions to deliver this {\bitstring} to $v_n$. 

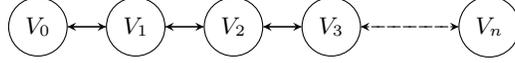
\begin{figure}
\begin{center}
\begin{tikzpicture}
\tikzstyle{every node}=[draw,circle];
\foreach \name/\position in {0/0,1/1.3,2/2.6,3/3.9}{
\node(\name) at (\position,0) {\footnotesize{$V_\name$}};
}
\node(n) at (6,0) {\footnotesize{$V_n$}};
\path[style=-stealth]
(0) edge (1)
(1) edge (2)
edge (0)
(2) edge (1)
edge (3)
(3) edge (2)
edge[dashed] (n)
(n) edge[dashed] (3);
\end{tikzpicture}
\end{center}
\caption{Chain graph where vertex $v_i$ is connected to vertices
  $v_{i-1}$ and $v_{i+1}$.}
\label{fig:linegraph}
\end{figure}

Assume that the vertex $v_1$ has the only {\bitstring} $\bldc$, and so $m = 1$. 
Note that the number of vertices having the
{\bitstring} $\bldc$ at the end of step $s$, which we denote by $k_s$, grows at most by one at
each transmission. Let $s_0$ be the smallest number of
transmissions such that $k_{s_0}= \lceil n/2 \rceil$. 

Since the set of vertices possessing the {\bitstring} is connected, the set of vertices having the {\bitstring} is 
\[
\displaystyle{\bigcup_{i=1}^{\lceil n/2 \rceil}{v_i}}.
\]
This set might also contain $v_0$, but then we ignore $v_0$ for the sake of simplicity 
of the analysis.

Denote by $E_i(s)$ the event that $v_i$ receives the {\bitstring} $\bldc$ at step $s$. 
We also denote by $P[E_i(s)]$ the conditional probability of the event $E_i(s)$ given that
$k_{s_0}= \lceil n/2 \rceil$ as above and $s \ge s_0$. 
 
We obtain that, for $s\geq s_0$ and $i=\lceil n/2 \rceil +1$,
\[ 
P[E_i(s)] \leq \frac{1}{k_{s_0}}\cdot \frac{1}{2} \; , 
\]
since $v_{i-1}$ is the next vertex
to transmit with probability $1/k_{s_0}$, and $v_{i-1}$ sends the
{\bitstring} to $v_i$ with probability $1/2$. Since $k_{s_0}= \lceil n/2 \rceil$, this implies that
\[
P[E_i(s)] \leq \frac{1}{n}\;.
\]

Since the transmissions are independent, the probability that $v_i$ receives the {\bitstring} in the first $s$
transmissions given that it did not receive it in the first $s_0$ transmissions, $s_0 \le s$, is:
\[ 
P \left[\bigcup_{s'=s_0+1}^{s}{E_i(s)} \right] \leq 1 - \left( 1-
\frac{1}{n}\right)^{s - s_0} \; .
\]
Take $s = b \cdot n + s_0$ for some constant $b \ge 0$ independent of $n$. 
It is obvious that $P[\bigcup_{s'=s_0+1}^{s}{E_i(s)}]$ is an increasing
function of $s$, and therefore, for any $s \leq b \cdot n + s_0$, we have:
\[
P \left[ \bigcup_{s'=s_0+1}^{s}{E_i(s)} \right] \leq  1 - \left( 1-
\frac{1}{n}\right)^{bn}\;.
\]

Since the events $E_{i-1}(s)$, $E_{i}(s)$ and $E_{i+1}(s)$ form a Markov chain, 
we can now upper-bound $P[E_{i+1}(s)]$:
\[
P[E_{i+1}(s)] \leq \left[1 - \left( 1-
\frac{1}{n}\right)^{bn}\right] \frac{1}{n}\;.
\]
Similarly, this yields:
\[
P \left[ \bigcup_{s'=s_0+1}^{s}{E_{i+1}(s)} \right] \leq \left[1 - \left( 1-
\frac{1}{n}\right)^{bn}\right]^2\;.
\]
This leads inductively to the formula:
\[
P \left[ \bigcup_{s'=s_0+1}^{s}{E_{n-1}(s)} \right] \leq {\left[1 - \left( 1-
\frac{1}{n}\right)^{bn}\right]^{n-1-i}}\;.
\]

Now, observe that  
\[
{\left[1 - \left( 1-
\frac{1}{n}\right)^{bn}\right]^{n-1-i}} \sim e^{-(n/2) e^{-b}} \; .
\]

Assume that the algorithm runs for $bn + s_0$ transmissions, where $b>0$ is a constant. 
To upper bound the expected number of times the {\bitstring} $\bldc$ is sent to $v_n$
during the execution of the algorithm, we note that the number of times $v_{n-1}$ 
transmits $\bldc$ to $v_{n}$ is less than $bn$. Then, the 
expected number of times that $\bldc$ is sent to $v_{n}$ is:
\[ 
\le bn \cdot e^{-(n/2) \cdot e^{-b}}\;.
\]
This value vanishes exponentially fast for sufficiently large values of $n$.
We conclude that the expected number of times that $\bldc$ is sent to $v_{n}$ is very close to zero. 
Therefore, if the number of transmissions in the algorithm is linear in $n$, the algorithm will not achieve 
an expected covering of the chain graph.
\end{proof}

\end{document}